\documentclass[11pt,a4paper]{article}
\usepackage[utf8]{inputenc}
\usepackage{fullpage}
\usepackage{geometry}
\usepackage{amsmath}
\usepackage{amsthm}
\usepackage{amsfonts}
\usepackage{bbm}
\usepackage{color}
\usepackage[ruled,vlined]{algorithm2e}
\usepackage{caption}
\usepackage{graphicx}
\usepackage{subcaption}
\geometry{
    a4paper,
    total={170mm,257mm},
    left=20mm,
    top=20mm,
}
\usepackage{mathtools}
\DeclarePairedDelimiter{\ceil}{\lceil}{\rceil}
\DeclarePairedDelimiter\floor{\lfloor}{\rfloor}

\newtheorem{example}{Example} 
\newtheorem{theorem}{Theorem}
 
\newtheorem{proposition}[theorem]{Proposition}

\newtheorem{definition}[theorem]{Definition}

\usepackage[unicode=true,pdfusetitle,
bookmarks=true,bookmarksnumbered=false,bookmarksopen=false,
breaklinks=false,pdfborder={0 0 0},pdfborderstyle={},backref=false,colorlinks=true]{hyperref}
\hypersetup{urlcolor=burntgreen,linkcolor=blue,citecolor=blue}
\usepackage{natbib}

\usepackage{xcolor}
\definecolor{burntorange}{rgb}{0.85, 0.35, 0.1}
\definecolor{charcoal}{rgb}{0.21, 0.27, 0.31}
\definecolor{coolblack}{rgb}{0.0, 0.28, 0.49}
\definecolor{burntgreen}{rgb}{0.05, 0.45, 0.27}
\definecolor{burntblue}{rgb}{0.05, 0.27, 0.8}

\newcommand{\Palg}{\textsc{PolyGS}}
\newcommand{\ON}{\textsc{offerNext}}
\newcommand{\WM}{\textsc{weakestMatch}}
\newcommand{\GSalg}{GS algorithm}
\newcommand{\TO}{\textsc{mapToExtendedMarket}}
\newcommand{\From}{\textsc{mapFromExtendedMarket}}


\title{Two-Sided Matching Markets in the ELLIS 2020 PhD Program}
\author{
Maximilian Mordig$^{1, 2}$
\and Riccardo Della Vecchia$^3$
\and Nicol\`o Cesa-Bianchi$^4$
\and Bernhard Sch\"olkopf$^{1, 2}$
}
\date{%
$^1$ Max-Planck Institute for Intelligent Systems Tübingen\\%
$^2$ ETH Zürich\\%
$^3$ Artificial Intelligence Lab, Bocconi University, Milano, Italy\\%
$^4$ Dept. of Computer Science \& DSRC, University of Milan, Italy\\[2ex]%
\today
}

    

\begin{document}

\maketitle

\begin{abstract}
The ELLIS PhD program is a European initiative that supports excellent young researchers by connecting them to leading researchers in AI. In particular, PhD students are supervised by two advisors from different countries: an advisor and a co-advisor.
In this work we summarize the procedure that, in its final step, matches students to advisors in the ELLIS 2020 PhD program. 
The steps of the procedure are based on the extensive literature of two-sided matching markets and the college admissions problem \citep{knuth1997stable,gale1962college,roth1992two}.
We introduce \Palg{}, an algorithm for the case of two-sided markets with quotas on both sides (also known as many-to-many markets) which we use throughout the selection procedure of pre-screening, interview matching and final matching with advisors.
The algorithm returns a stable matching in the sense that no unmatched persons prefer to be matched together rather than with their current partners (given their indicated preferences). 
\cite{roth1984evolution} gives evidence that only stable matchings are likely to be adhered to over time.
Additionally, the matching is student-optimal.
Preferences are constructed based on the rankings each side gives to the other side and the overlaps of research fields.
We present and discuss the matchings that the algorithm produces in the ELLIS 2020 PhD program.

\end{abstract}

\section{Introduction}
The ELLIS PhD program is a European initiative that supports excellent young researchers by connecting them to leading researchers in AI. In particular, PhD students are supervised by two advisors from different countries.
This document summarizes the procedure that, in its final step, matches students to advisors in the ELLIS 2020 PhD program. 
After an overview of the selection procedure, we present the theory of two-sided matching markets which is used first to match the students to the evaluators for an initial pre-screening, then to match candidates to advisors in the interviews, and finally is responsible for matching students to advisors in ELLIS 2020. 
We propose an algorithm that matches students to advisors. This matching is a suggestion and may be inspected and corrected manually to satisfy additional criteria.
The co-advisor must be manually matched to a student and must be from a different country than the main advisor in ELLIS.
Briefly, each student ranks research fields and/or professors they would like to work with and vice versa for professors. This happens in the following order.
First, when students apply, they are pre-screened using just the preferences of students over professors and the overlaps between fields of students and advisors. 
After this, advisors also rank the acceptable candidates and the interviews are set. 
After the interviews, both students and professors can consequentially update their preferences and the final matching takes place, matching each student to an advisor. The co-advisor is added separately, taking care of possible constraints imposed by the program. 

In \citep{mordig2021multisided}, we address the case when both advisors and co-advisors should be matched algorithmically to a student without human aid. 
We assume that all preferences are available to us. When the acquisition of preferences is too expensive, \cite{charlin2013toronto} uses machine-learning approaches to interpolate sparse preference data to other persons.
Furthemore, our procedure relies heavily upon the seminal work on two-sided markets by \cite{gale1962college} which had high practical impact since its introduction, leading to improved matching systems for high-school admissions and labor markets \citep{roth1984evolution}, house allocations with existing tenants \citep{abdulkadirouglu1999house}, content delivery networks \citep{maggs2015algorithmic}, and kidney exchanges \citep{roth2005pairwise}.

We introduce the different phases of the selection procedure in a broad and schematic way in Section~\ref{sec:phases}. In Section~\ref{sec:polygamous} we present the algorithm that is used (with different parameters) in the different phases. In Section~\ref{sec:application} we state the choices of parameters and give additional details about the procedure adopted in the phases of the selection procedure.


\section{Phases of the Selection Procedure}\label{sec:phases}
The ELLIS PhD selection procedure consists of the following phases:
\begin{itemize}
\item \emph{Phase 1 (Student-Evaluator Matching for Pre-Screening)}:
Students state up to five research fields and optionally rank up to 10 professors. Each professor provides up to five research fields and specifies an evaluator who pre-screens applications (e.g. a postdoc).

Based on research fields and on the ranking of professors made by students, both students and evaluators are assigned preferences over the other side of the market. A matching algorithm runs to match students and evaluators. Each student is matched to three evaluators.
Each evaluator is assigned the same maximum capacity of students to pre-screen.
Evaluators score the assigned students.

\item \emph{Phase 2 (Student-Advisor Matching for Interviews)}:
After filtering out badly scoring students, professors rank students based on the pre-screening scores and other means.
Each professor ranks a number of students from 1 - 4 (best to worse) reflecting their priorities for interviews. 5 and 6 indicate that the scoring advisor does not consider the applicant for interviews. 5 specifies that the student is not a good fit for the scoring advisor, but may be good for someone else; 6 marks the student as unfit.
An advisor may be assigned to interview any of their ranked students. Each professor specifies their (maximum) interviewing capacity.

Using preferences of students from \emph{Phase 1}, the ranking assigned by advisors to students in this phase, plus the overlap of research fields, the matching for interviews is computed. Each student must be interviewed by three advisors (ELLIS Excellence Criterion).\footnote{
Due to the limited interviewing capacity of advisors and the high number of ranked students, it proved impossible to match all promising candidates to 3 interviews. In practice, then, students are also matched for 1 or 2 interviews.
}
The algorithm provides each advisor with a number of students to interview not greater than their interviewing capacity.
Advisors must interview the first 80\% of the assigned students, the remaining 20\% are provided on a voluntary basis.

\item \emph{Phase 3 (Student-Advisor Matching for Hiring)}:
Advisors and students rank the other side again.
Advisors also provide their (maximum) hiring capacity.
Based on their updated preferences, advisors and students are matched together. Each student is matched to at most one advisor and each advisor is matched up to their specified hiring capacity.

\item \emph{Phase 4 (Student-Co-advisor Matching for Hiring)}:
Advisors send out acceptance letters to all the students they got matched with. They make sure that they find a co-advisor for any student who accepts.
This year, there is no algorithmic support to find co-advisors.

\item \emph{Phase 5 (Matching the Unmatched)}:
Students who are ranked for offers, but not matched, will enter the pool for the rematching phase. Advisors who were not matched or still have extra hiring capacity can see in the system which students are still unmatched but positively scored. This year, there is no algorithmic support for the rematching stage.

\end{itemize}

\section{Matching theory -- polygamous market}\label{sec:polygamous}
This section explains the theory behind the matching algorithm (Algorithm~\ref{alg:projectAlgorithm}) which is used in each of the phases of the ELLIS PhD program selection. Our work is based on an extensive literature \citep{knuth1997stable,gale1962college,roth1992two} which culminated in the 2012 Economic Nobel Prize to Alvin E. Roth and Lloyd Shapley for their work on stable allocations and the practice of market design\footnote{\url{https://www.nobelprize.org/nobel_prizes/economics/laureates/2012/}}.
Two-sided matching markets are game-theoretic abstractions which correspond to bipartite matchings. We recall some basic results for marriage markets in Appendix~\ref{app:mm} and their extensions to the college admission problem in Appendix~\ref{app:cap}. The theory below provides the theoretical background for the case of two-sided markets with quotas on both sides, also known as many-to-many markets.

Consider the setting where students need to be matched to advisors for interviews as in the case at hand.
Each student can take a maximum number of interviews and each professor can interview a limited number of students. Furthermore, no student should be assigned to the same advisor more than once. 
Students and advisors form the two sides of a matching market. 
We can phrase this problem as follows.

\textbf{We adopt the convention with men and women for consistency with the literature, and we want to stay out of the gender debate. We use the pronouns ``his/her'' to make clear that we refer to a man/woman in the model, so ``person or himself'' can also mean ``person or herself''.}
Let $M$ and $W$ be finite sets of men and women, and let each person be endowed with a strict order of preference with respect to the members of the opposite sex. For each person $p \in W \cup M,$ define $q_{p}$ to be the ``quota''/``capacity''
of this person, i.e., the amount of spouses from the opposite sex this person seeks. 
We call this market a \emph{``polygamous market''}, in reference to the classical marriage market problem introduced by \cite{gale1962college}. In this case, we want to allow quotas on both sides, unlike college admission markets and marriage markets in Appendix~\ref{app:mm}~and~\ref{app:cap}. 

Let us formally define preference lists and the polygamous market.

\begin{definition}[preference lists]
For a market over men $M$ and women $W,$ each man $m$ has preferences $P(m)$ over all women in $W \cup \{ m \}$ defined by the binary relations $\geq_m, =_m$ (which defines $>_m, <_m, \leq_m$). A man has strict preferences if $=_m$ is equal to $=$, i.e. if he is not indifferent between any two women.
Analogously, each woman $w$ has preferences $P(w)$ over all persons in $M \cup \{ w \}$. 
Person $p_1$ is acceptable to $p$ if $p_1 >_p p$.
For persons $p, p_1$ and a set of persons $K$, we define $p_1 >_p K \iff \exists \tau \in K: p_1 >_p \tau$.
\end{definition}

We assume that the spots are independent such that individual preferences suffice to express the preferences over assignments of up to $q_p$ persons.
\begin{definition}[polygamous market]
A polygamous market over men $M$ and women $W$ is defined by the quadruple $(M, W, P, Q)$, where:
\begin{itemize}
    \item $Q = \{ q_m \mid m \in M \} \cup \{ q_w \mid w \in W \}$ are the quotas of men and women,
    \item $P$ is the set of preference lists of men and women:
    \[P=\left\{P\left(m_{1}\right), \dots, P\left(m_{|M|}\right), P\left(w_{1}\right), \dots, P\left(w_{|W|}\right)\right\}.\]
\end{itemize}
\end{definition} 

\begin{definition}[valid matching]
A matching on the polygamous market $(M, W, P, Q)$ is a multi-set (i.e. elements can appear more than once) $\mu \subset M \times W$ such that:
\begin{itemize}
    \item $|\mu(w)| \leq q_w \; \forall w \in W$, where $\mu(w) = \{m \mid (m, w) \in \mu \}$,
    \item $|\mu(m)| \leq q_m \; \forall m \in M$, where $\mu(m) = \{w \mid (m, w) \in \mu \}$.
\end{itemize}
\label{def:validMatchingQuotasPoly}
\end{definition}
We have the property $m \in \mu(w) \iff w \in \mu(m)$. 
$|\mu(m)| \leq q_m$ means that $m$ is matched $q_m - |\mu(m)|$ times to itself and we fill $\mu(m)$ with $m$ up to size $q_m$, same for women.
Since a set does not allow duplicate elements, the same man and woman can match at most once.\footnote{
If $\mu$ were a multi-set, they could match several times. In this case, stable matchings can be obtained by running the traditional GS algorithm on the extended market, where both sides are replicated according to their capacities.
}
When the quotas $Q$ satisfy the college admission market assumption, i.e. one side of the market has quotas all equal to one, this definition coincides with the definition given in Appendix~\ref{app:cap}.

A matching $\mu$ is unstable if any two persons prefer to be together rather than with their assigned partners. From this, a new matching can be constructed.
It should be valid in the sense that it matches any pair at most once. So $(m, w)$ is only a blocking pair if it is not matched already in $\mu$.
On marriage and college admission markets, this definition coincides with the definitions in Appendix~\ref{app:cap}.

\begin{definition}[stability]
\label{def:stabilityProject}
A matching $\mu$ is unstable if there exists a pair $(m, w) \in (M \times W) \cup \{ (m, m) \mid m \in M \} \cup \{ (w, w) \mid w \in W \}$ such that $(m = w \lor m \notin \mu(w))$ and $w >_m \mu(m)$ and $m >_w \mu(w)$. $(m, w)$ is called a blocking pair of $\mu$.\footnote{
The case $m = w$ is also known as individual rationality.
}
\end{definition}

In Algorithm \ref{alg:projectAlgorithm}, we introduce \Palg, which is a direct reformulation of the Gale-Shapley (GS) algorithm to the case of quotas on both sides of the polygamous market.
All men are initially unmatched and women start by matching with themselves as many times as they have capacity.
At each step of the algorithm, a man with available quotas proposes to the woman he prefers most among all women who have not (yet) rejected him. Next, the woman compares this offer with the least favourite man among all the ones she is provisionally engaged to.
If the new proposal is worse according to her preference list, she directly rejects it. 
If the new proposal is better according to her preference list, she disengages the least favourite man and provisionally engages with the new one. As long as a man hasn't filled his quotas, he continues proposing.
If no acceptable woman is left, he fills the remaining spots with himself.
Furthermore, we use the notation $[w]^{q_w}$ to refer to the list $[w, \dots, w]$ of size $q_w$.
In Algorithm~\ref{alg:projectAlgorithm}, the function $\ON(P,m)$, returns $m$'s most preferred woman who did not reject him yet, given his preferences $P(m)$.
In the algorithm, we define $\mu(m) = \{w \mid (m, w) \in \mu \}$, so it is not filled to capacity (during the algorithm).
The function $\WM(w, \mu(w), P)$ returns the weakest match $p \in \mu(w)$ of the woman $w$ ($p$ can be either a man or the woman herself) according to her preferences $P(w)$.
Note that $m >_w \WM(w, \mu(w), P)$ is equivalent to $m >_w \mu(w)$. 

\begin{algorithm}[ht]
\DontPrintSemicolon
\SetAlgoLined
\KwData{Market $(M, W, P, Q)$, quotas $Q$}
\KwResult{Matching $\mu$}
$\mu \gets \cup_{w\in W}\cup_{i=1}^{q_w} \{(w,w)\}$ \tcp{\small match every woman $q_w$ times to herself}
\While{there is a man $m$ with available capacity, i.e. $|\mu(m)| < q_m$}{
 $w \gets \ON(P, m)$ \tcp{\small best woman $w$ to which $m$ has not yet proposed, otherwise $m$}
 \uIf{$w = m$}{
   \tcp{\small man proposed to himself and accepts}
   $\mu \gets \mu \cup \{ (m, m) \}$
 }
 \ElseIf{$m >_w m' = \WM(w, \mu(w), P)$}{
   $\mu \gets \mu \setminus \{ (m', w) \}$ \tcp{\small unmatch $(m', w)$ (once only)}
   $\mu \gets \mu \cup \{ (m, w) \}$ \tcp{\small match $(m, w)$}
 }
}
\Return{$\mu$}
\caption{\Palg}
\label{alg:projectAlgorithm}
\end{algorithm}

The algorithm coincides with the GS algorithm on marriage markets and with the college GS algorithm on college admission markets.
We now prove that this algorithm returns a matching which matches the same man and woman at most once. Additionally, it is stable and optimal for the side which is proposing.
The proofs are adaptations of the many-to-one case \cite{gale1962college}.

\begin{proposition}
\Palg{} terminates and returns a valid and stable matching (Definition~\ref{def:stabilityProject}).
\end{proposition}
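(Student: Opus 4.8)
The plan is to follow the classical Gale--Shapley deferred-acceptance argument, adapted to two quotas and exploiting the responsiveness (separability) assumption that the spots are independent, which is exactly what makes the pairwise blocking condition of Definition~\ref{def:stabilityProject} the right notion of stability. I would prove the three claims --- termination, validity, stability --- in that order, because the structural facts established for the first two (each man proposes down his list at most once per woman; women keep full rosters whose weakest element only improves) are precisely what the stability argument consumes.

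For termination, the key observation is that $\ON(P,m)$ returns the most preferred woman to whom $m$ has \emph{not yet proposed} (otherwise $m$ himself), so each man walks strictly down his preference list and proposes to any fixed woman at most once. Each iteration of the while loop issues exactly one proposal, the number of distinct man--woman proposals is at most $|M|\cdot|W|$, and the self-proposals are bounded by $\sum_{m} q_m$; hence the loop halts after finitely many steps.

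For validity I would maintain two invariants across iterations. On the women's side, $w$ starts with a full roster of $q_w$ self-matches, and the only modification (the \textsc{ElseIf} branch) removes one element and inserts one, so the total roster size stays equal to $q_w$ and in particular the number of real partners never exceeds $q_w$. On the men's side, a pair $(m,w)$ is added only when $|\mu(m)| < q_m$, so $|\mu(m)| \le q_m$ holds throughout; and since $\mu$ is a set and $m$ proposes to each woman at most once, no pair is matched twice.

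The substantive part is stability, and I expect the main obstacle to lie in setting up the right monotonicity invariant and then handling the self-matches uniformly with the pairwise blocking condition. First I would prove a monotonicity lemma: because every roster update for $w$ discards her current weakest partner $m'$ in favour of some $m >_w m'$, the quantity $\WM(w,\mu(w),P)$ is non-decreasing in $w$'s preference, so once $w$ has rejected or bumped a man $m$, every subsequent (hence final) member of $\mu(w)$ is strictly preferred by $w$ to $m$. Individual rationality follows immediately, since men propose only to acceptable women (filling with themselves once none remain) and women never accept a man worse than themselves; this rules out blocking pairs with $m=w$. For a blocking pair with $m\ne w$ and $m\notin\mu(w)$, the hypothesis $w >_m \mu(m)$ says $w$ is preferred by $m$ to his weakest current partner $w_0$; since $m$'s proposals form a prefix of his list and $w$ lies strictly above $w_0$ (a real woman, or himself if he had exhausted all acceptable women), $m$ must have proposed to $w$. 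As $m$ is not matched to $w$ at termination, $w$ rejected or bumped him, so by the monotonicity lemma every member of $\mu(w)$ is $>_w m$, contradicting $m >_w \mu(w)$. The only delicate bookkeeping is the ``prefix of proposals'' claim, which must survive the interleaving of bumping and re-proposing across the multiple slots on both sides --- this is exactly where the responsiveness assumption does its work.
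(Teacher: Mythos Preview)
Your proposal is correct and follows essentially the same approach as the paper: termination via the finiteness of proposals, validity via the roster-size invariants, and stability via the non-decreasing weakest-match invariant combined with the observation that a man who prefers $w$ to some current partner must already have proposed to $w$. Your write-up is in fact more explicit than the paper's (e.g., the prefix-of-proposals bookkeeping and the separate treatment of self-matches), but the skeleton is identical.
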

\begin{proof}
The algorithm terminates because each man proposes to each woman at most once. 
The returned matching $\mu$ satisfies the quotas because $|\mu(w)| = q_w$ is preserved over iterations for all women $w$ and the algorithm terminates only when $|\mu(m)| = q_m$ for all men $m$. 
Since each man proposes to each woman at most once, a man can be matched to a woman at most once. Therefore, the matching is valid.

To prove stability, we use the following property: Over iterations, the weakest match of any woman $w$ cannot decrease.
By contradiction, let $(m, w)$ a blocking pair of the matching $\mu$.
If $m = w := p \in M \cup W$, this implies $p >_p \mu(p)$, but this can never happen because only acceptable partners are matched. 
Otherwise, $m$ is a man and $w$ is a woman and there exist $m', w'$ (not necessarily man and woman) such that $w >_m w' \in \mu(m)$ and $m >_w m' \in \mu(w)$, where $m'$ is the weakest element of $\mu(w)$. Since $w >_m w' \in \mu(m)$, $m$ must have proposed to $w$. Since $m >_w m' \in \mu(w)$ and the weakest match can never decrease, $w$ would never have rejected $m$, which contradicts $m \notin \mu(w)$.
\end{proof}

Looking at the proof of optimality in the college admission market (without passing via the extended market), we see that it can be extended without problems to this setting.
The difference is that a matching $\mu$ is unstable only if the blocking pair $(m, w)$ is not already part of $\mu$, i.e. $m \notin \mu(w)$.

\begin{proposition}[Optimality]\label{prop:optMain}
Assume strict preferences and men propose (with quotas on both sides). \Palg{} returns a man-optimal result $\mu$, i.e. for every man $m$: $\mu(m)_i \geq_m \mu'(m)_i \; \forall m, i$. The assignments $\mu(m), \mu'(m)$ are ordered from best to worst in terms of $\geq_m$ and $\mu'$ is stable according to Definition \ref{def:stabilityProject}.
Under strict preferences, \Palg{} returns a unique result (independently of the order in which men propose).
\end{proposition}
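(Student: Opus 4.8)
The plan is to reduce the whole statement to a single structural lemma: in any run of man-proposing \Palg{}, no man is ever rejected by a woman who is \emph{achievable} for him, where I call $w$ achievable for $m$ if some stable matching $\mu'$ (in the sense of Definition~\ref{def:stabilityProject}) pairs them, i.e.\ $m \in \mu'(w)$. Granting this lemma, both claims follow quickly. Since a man proposes in strictly decreasing order of preference and is never turned down by an achievable woman (once accepted, such a woman never later bumps him), the set $\mu(m)$ he ends up with is exactly the list of his $q_m$ most preferred achievable women, padded with $m$ himself if fewer than $q_m$ are achievable; note also that every $w \in \mu(m)$ is achievable because the output $\mu$ is itself stable by the previous proposition. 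As any stable $\mu'$ assigns $m$ only achievable women, sorting both assignments from best to worst yields $\mu(m)_i \ge_m \mu'(m)_i$ for all $i$. Uniqueness is then immediate: this description of $\mu(m)$ depends only on the preferences and on the order-independent notion of achievability, not on the proposal order; equivalently, two man-optimal stable matchings dominate each other position-wise and hence coincide under strict preferences.

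I would prove the lemma by induction on the steps of the algorithm, taking the \emph{first} moment at which some woman $w$ rejects a man $m$ achievable for her, and fixing a stable $\mu'$ witnessing $m \in \mu'(w)$. At that moment $w$ holds $q_w$ partners each strictly preferred to $m$; since $m$, being achievable for $w$, is acceptable to her (stability forbids the self-pair $(w,w)$), none of those held slots is $w$ herself, so they form a set $H$ of $q_w$ men, each $>_w m$ and none equal to $m$. Because $\mu'(w)$ contains $m$ and has at most $q_w$ elements, $H$ cannot be contained in $\mu'(w) \setminus \{m\}$ by counting, so there is some $m' \in H$ with $m' \notin \mu'(w)$. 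The first blocking condition $m' >_w \mu'(w)$ is then free: $w$ prefers $m'$ to $m$ and $m \in \mu'(w)$, so the existential reading of $>_w$ from the preference-list definition gives $m' >_w \mu'(w)$ at once.

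The hard part will be the second blocking condition, $w >_{m'} \mu'(m')$, which is where the many-to-many structure genuinely departs from the one-to-one case. I would argue from the proposal order: since $m'$ currently holds $w$ and each man proposes strictly top-down to the best woman he has not yet proposed to, every woman $m'$ strictly prefers to $w$ has already been proposed to, and so has either rejected him or is currently held by him. By the induction hypothesis every woman who has rejected $m'$ is non-achievable for him. Now suppose $w >_{m'} \mu'(m')$ fails. If $m'$ has a free (self-matched) slot in $\mu'$ this is already absurd, since $m'$ proposed to $w$ and hence $w >_{m'} m'$; otherwise $\mu'(m')$ consists of $q_{m'}$ women all strictly preferred to $w$, all achievable for $m'$, hence none having rejected him, hence all currently held by him --- together with $w$ this forces $m'$ to hold $q_{m'}+1$ distinct women, exceeding his quota, a contradiction. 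Thus $w >_{m'} \mu'(m')$ holds, so $(m',w)$ with $m' \notin \mu'(w)$ blocks $\mu'$, contradicting its stability and completing the induction.

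With the lemma in hand I would close by spelling out the two corollaries: the man-optimality inequality $\mu(m)_i \ge_m \mu'(m)_i$ from the top-$q_m$-achievable description of $\mu(m)$, and uniqueness from the fact that any two outputs of \Palg{} are stable and man-optimal, hence mutually dominating and therefore equal under strict preferences. The only genuinely new ingredient relative to the college-admission proof is the quota-counting step in the previous paragraph; everything else is the classical Gale--Shapley argument transported through the existential definition of $>_w$ and the self-matching convention.
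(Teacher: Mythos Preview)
Your proposal is correct and follows essentially the same route as the paper: reduce optimality (and hence uniqueness) to the lemma that no man is ever rejected by an achievable woman, then prove the lemma by looking at the first such rejection and extracting a blocking pair $(m',w)$ for the witnessing stable matching $\mu'$. Your treatment of the second blocking condition $w >_{m'} \mu'(m')$ via the quota-counting argument is exactly the content of the paper's compressed line ``since $m_i$ applied to $w$, this means that $m_i$ was rejected by the achievable $\mu'(m_i)_{j_i}$ for some $j_i$''; you have simply spelled it out more carefully, and your explicit identification of $\mu(m)$ as the top $q_m$ achievable women makes the passage from the lemma to the coordinatewise inequality cleaner than in the paper.
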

\begin{proof}
Uniqueness follows immediately from optimality.
A woman $w$ is achievable to man $m$ if there exists a stable matching $\mu'$ such that $m \in \mu'(w)$.
It is enough to prove that a man $m$ is never rejected by an achievable woman.
Indeed, assume there exists a man $m$ and let $i$ the first index such that $\mu(m)_i <_m \mu'(m)_i$.
By assumption, $\mu(m)_j \geq_m \mu'(m)_j >_m \mu'(m)_i$ for all $j < i$. Also, $\mu'(m)_i >_m \mu(m)_i \geq_m \mu(m)_j$ for all $j \geq i$ and this means that man $m$ was rejected by $\mu'(m)_i$ since $m$ applied to $\mu(m)_i <_s \mu'(m)_i$ and is not matched to $\mu'(m)_i$.

Let $m$ the first man who is rejected by an achievable woman $w$ during the execution of the algorithm.
Since $w$ is achievable to $m$, let $\mu'$ the stable matching in which $m$ is matched to $w$, i.e. $m \in \mu'(w)$ (and $w \in \mu'(m)$).
Since $m$ was rejected (and is acceptable to $w$), there must be $q_w$ other men who are all preferred by the woman: $m_i >_w m \in \mu'(w) \; \forall i=1, \dots, q_w$. 
Because none of these other men $m_i$ was yet rejected by an achievable woman, $\forall i \, \exists j_i: w \geq_{m_i} \mu'(m_i)_{j_i}$.
Indeed, by contradiction, assume that there exists an index $i$ s.t. $w <_{m_i} \mu'(m_i)_{j} \; \forall j=1,\dots,q_{m_i}$. 
$m_i$ cannot be matched to $w$.
Since $m_i$ applied to $w$, this means that $m_i$ was rejected by the achievable $\mu'(m_i)_{j_i}$ for some $j_i$, which is a contradiction with $m$ being the first man with this property.

Since $m \in \mu'(w)$, there must exist $m_i \notin \mu'(w)$. Thus $w \neq \mu'(m_i)_{j_i}$ and $w >_{m_i} \mu'(m_i)_{j_i} \in \mu'(m_i)$. Thus, $(m_i, w)$ (with the additional property $m_i \notin \mu'(w)$) blocks $\mu'$, which contradicts the stability of $\mu'$. Hence $w$ is not achievable to $m$.
\end{proof}
In fact, one can also prove that it is woman-pessimal, i.e. $\mu'(w)_i \geq_w \mu(w)_i \; \forall w, i$ for all stable matchings $\mu'$.

As for the GS algorithm, women can propose and we obtain a woman-optimal matching.
Optimality generally does not hold when preferences are not strict.
When preferences are non-strict, we can break ties and the algorithm returns a matching that is also stable with respect to the original preferences.

\section{Application to ELLIS 2020}\label{sec:application} 
As part of the selection procedure, ELLIS aims to match people on one side of the market to the other side, possibly several times.
We will always use the \Palg{} with students proposing.
It reduces to the traditional GS algorithm and college admission algorithm when quotas are all equal to one or equal to one on one side respectively.
Students always propose to ensure student-optimality. 
This means that students get the best matches among all the stable ones.
In each of the phases, we need to specify the market participants and their preferences as well as their capacities.
When the preferences provided to the algorithm are non-strict, they are broken arbitrarily. To have more control over the tie-breaking, we break some of the ties beforehand.
We construct preferences based on the similarity score between fields of research. For person $p_1$, the research similarity score with person $p_2$ is
\begin{align*}
S(p_1, p_2) = R(p_1)^T \cdot R(p_2),
\end{align*}
where $R(p)$ denotes the multi-one-hot encoding of the research interests of person $p$. Say $[A, B, C]$ are the available research fields and person $p$ is interested in fields $A$ and $C$, then $R(p) = [1, 0, 1]^T$.
A person can use this score to rank the people on the other side of the market.

We now describe each of the phases.
Incomplete, fake and blatantly bad students are removed from the system before each phase.
Removed students may be added again after manual inspection at each phase, e.g. students who got kicked out because they weren't matched to three evaluators (Phase 1), three interviews (Phase 2) or to an advisor (Phase 3).

\subsection{Phase 1 - Pre-Screening}
By December 1, students and professors specify their areas of interest. Students may additionally rank up to 10 professors.
Each professor provides an evaluator to pre-screen applications. The evaluator's research fields are those of the corresponding professor and the market consists of evaluators and students. 
Evaluators rank students based on research similarity score.
Students give the best ranks to the (up to 10) advisors (evaluators) they listed, followed by all others ordered by research overlap.
More precisely, given a student, let $\mathcal{A}$ the ordered set of (up to 10) advisors who were ranked by the student.
Let $\mathcal{B}$ the ordered set of all advisors ordered by decreasing research overlap score.
Then, the student's preferences become:
\begin{align*}
\mathcal{A} + (\mathcal{B} \setminus \mathcal{A}),
\end{align*}
where the $+$ operation appends the second list to the first list preserving the order.
Since preferences are based on discrete scores, ties can occur. Advisors break ties between any indifferent students such that they prefer students who listed them.
For example, if $\mathcal{A} = \{ a_1, [a_2, a_3] \}, \mathcal{B} = \{ [a_1, a_4], a_2, [a_3, a_5, a_6] \}$, the new preferences are $\{ a_1, [a_2, a_3], a_4, [a_5, a_6] \}$. If advisor $a_1$ has preferences $\{ [s_1, s_2, s_3, s_4], [s_5, s_6, s_7] \}$ and $s_1, s_2, s_5$ are the only students who listed $a_1$, the advisor's preferences become $\{ [s_1, s_2], [s_3, s_4], s_5, [s_6, s_7] \}$.
The remaining ties are broken arbitrarily.
Students all have quota $3$. Each evaluator has quota
$\ceil*{\frac{3 |S|}{|A|}}$, where $|S|, |A|$ are the total number of students and advisors. 
In the period December 2 - 4, the algorithm runs.
Students are assigned to evaluators. 
From December 5 - 10, evaluators score each student (using the scores ``A'', ``A-B'', ``B'', ``B-C'' and ``C'').
Because a student may not get assigned to three evaluators (e.g. in the scenario with only one evaluator), we break ties again randomly up to 10 times.
If this is unfruitful, we remove (a subset of) insufficiently matched students and rerun the algorithm. This is repeated until a solution is found.

In Figure~\ref{fig:phase1RankDistribution}, we plot a graph that shows that a very large proportion of students gets matched to their first choice.  

\paragraph{Bound on the number of removed students:}
Since we remove students, the matching may not be stable with respect to the original preferences over all students (including the removed students).
We can bound the maximum number of removed students.
Let $q_s$ the maximum (and target) capacity of students, $q_s^{\text{min}}$ the minimum number of matches a student must have in order not to be removed.
The evaluator capacity is $q_e = \ceil*{\frac{q_s |S|}{|E|}}$.
Suppose $k$ students were removed and consider the next iteration (one iteration corresponds to tie breaking up to 10 times).
This means that at most $q_s (|S| - k)$ evaluator spots are occupied, i.e. at least $q_e |E| - q_s (|S| - k) \geq q_s k$ evaluator spots are free.
In this iteration, a student cannot be matched if all these free spots are distributed over $q_s^{\text{min}} - 1$ evaluators.
Once $q_s k > (q_s^{\text{min}} - 1) \cdot q_e$, every student is guaranteed to find enough evaluators and the algorithm terminates.
This holds when $q_s k > (q_s^{\text{min}} - 1) \cdot (\frac{q_s |S|}{|E|} + 1)$.
Therefore, the maximum number of removed students is $k \leq \floor*{(q_s^{\text{min}} - 1) \cdot (\frac{|S|}{|E|} + \frac{1}{q_s}) + 1}$. The fraction of removed students is $\frac{k}{|S|} \leq (q_s^{\text{min}} - 1) \cdot (\frac{1}{|E|} + \frac{1}{|S| q_s}) + \frac{1}{|S|}$. As expected, $k$ is smaller the smaller the ratio $\frac{|S|}{|E|}$ is.
In Phase 1 of ELLIS, $q_s = 3, q_s^{\text{min}} = 3, |E| \approx 100, |S| \approx 500$, therefore $k \leq 11$ and $\frac{k}{|S|} \leq 2.2 \%$.

\begin{figure}
    \centering
    \includegraphics[height=0.4\textheight]{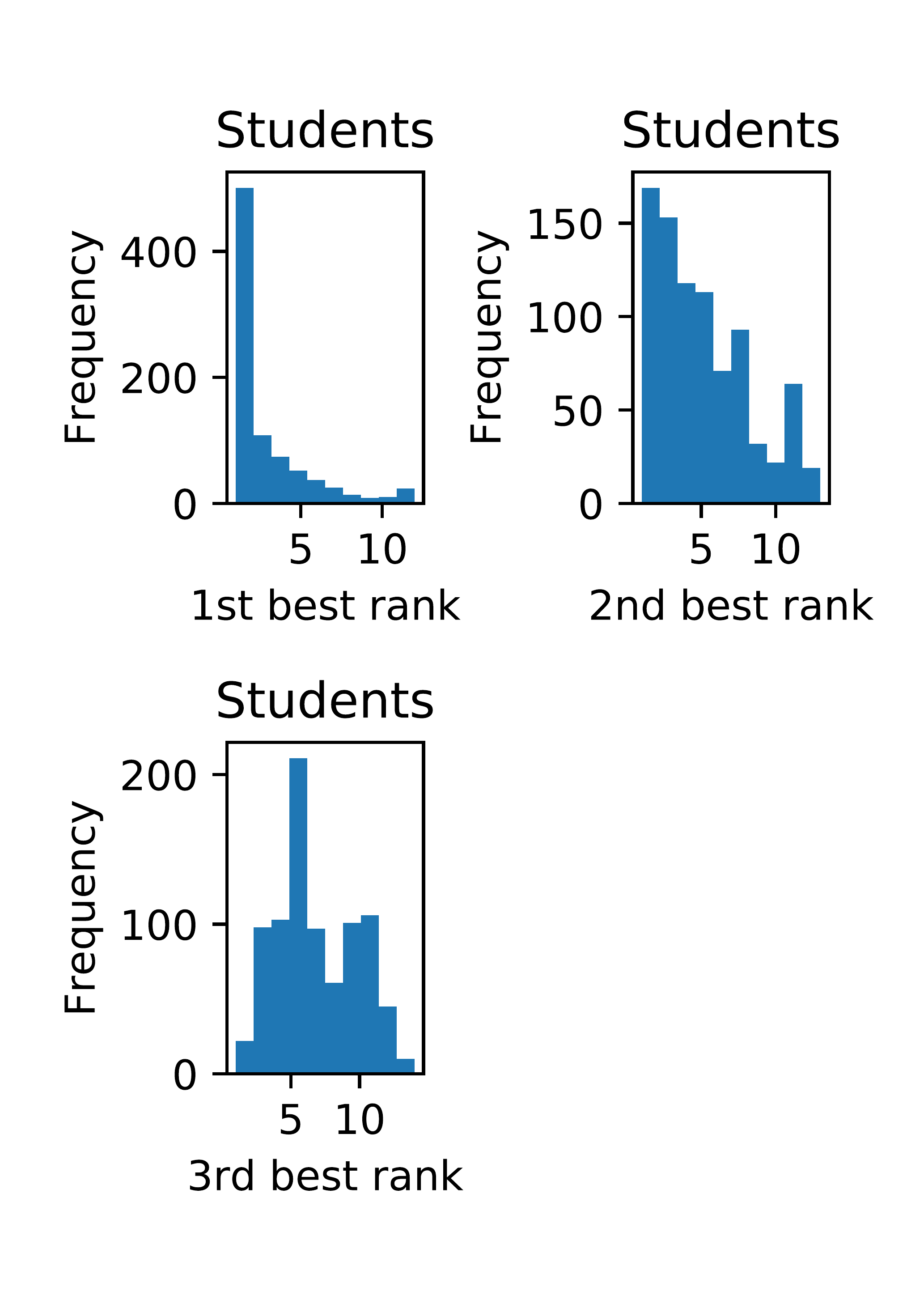}
    \caption{Rank distribution of the students' matches in Phase 1: The top-left shows the number of students whose best match corresponds to their $i$th ranked person (with $i$ on the $x$-axis).
    The top-right is for the second-best match and the bottom-right is for the third-best match.
    }
    \label{fig:phase1RankDistribution}
\end{figure}

\subsection{Phase 2 - Interview Matching}
From December 14 2020 - January 15 2021, professors score students (1 - 4, 5, 6)  based on the application documents, scores from the pre-screening phase and other means\footnote{
In the system, professors can see which students listed them, so they are more likely to rank those.
}.
The score ``5'' means "not a good fit for me, but still good", ``6'' means "this student should not be part of ELLIS" (and is an indication to remove this student from the system). If an advisor ranks a student ``5'' or ``6'', they will not be matched to this student.
Professors also specify their interviewing capacity.
Based on the scores and research overlaps (to break ties of students with same scores), a ranking over scored students is created for each professor. A professor can only be matched to students they scored and didn't score ``5'' or ``6''.
Students are assigned preferences in the same fashion as before. 
Each student has quota $3$ (``ELLIS Excellence Criterion''\footnote{To ensure the excellence of the hired students, this is one of the requirements decided by the ELLIS committee.}). 
While it is certainly possible that an advisor interviews all students that they are interested in, the goal is to also give less ``visible/good'' candidates a chance to get interviewed and possibly hired. Therefore, even the best students can only be assigned three interviews. Additional interviews can be organized individually if necessary.
Each advisor is assigned a quota which is 80\% of their specified interviewing capacity.\footnote{
When the stated interview capacity is less than $3$, it is left as is.
}
We decrease the capacity to 80\% because an advisor must interview all of these 80\% and 100\% may be too severe.
Advisors are not constrained regarding their remaining capacity; we give suggestions for these remaining interviews as outlined below.

Because the ELLIS Excellence Criterion is hard to satisfy, we require each student to be matched to at least 2 interviews rather than 3. If this student should be hired in Phase 3, the criterion can be satisfied by manually arranging an interview.
Therefore, each student has a minimum capacity of 2 and a maximum capacity of 3. When a student is assigned to less than 2 interview slots, they are removed and the algorithm reruns (as described in Phase 1).
Since many students may be matched to less than 2 interviews, we remove at most 20 students at a time (based on average advisor ratings). 
Note that a student with one interview, who was not removed, can match to one interview slot previously taken by a removed student and this avoids their removal.
Finally, when a student has at least one ``1'' and at least one ``5'' or at least two ``1'', they are never removed even if they only have one interview slot.

The algorithm runs in the time frame January 15 - January 22 2021. 
To gain insight into the next-best matches, students and advisors participate again with their remaining quota (including the additional 20\%). No students are removed in this second matching if they have too few interviews. These new matches are included in the optional list of candidates an advisor may wish to interview. However, ELLIS does not put any restrictions on how advisors eventually fill their additional 20\% capacity.
In Figure \ref{fig:phase2RankDistribution}, we plot the distribution of the rank of the $i$th best match for students with $i=1, \dots, 3$, which is the analogous of Figure \ref{fig:phase1RankDistribution}. 

\begin{figure}
    \centering
    \includegraphics[height=0.4\textheight]{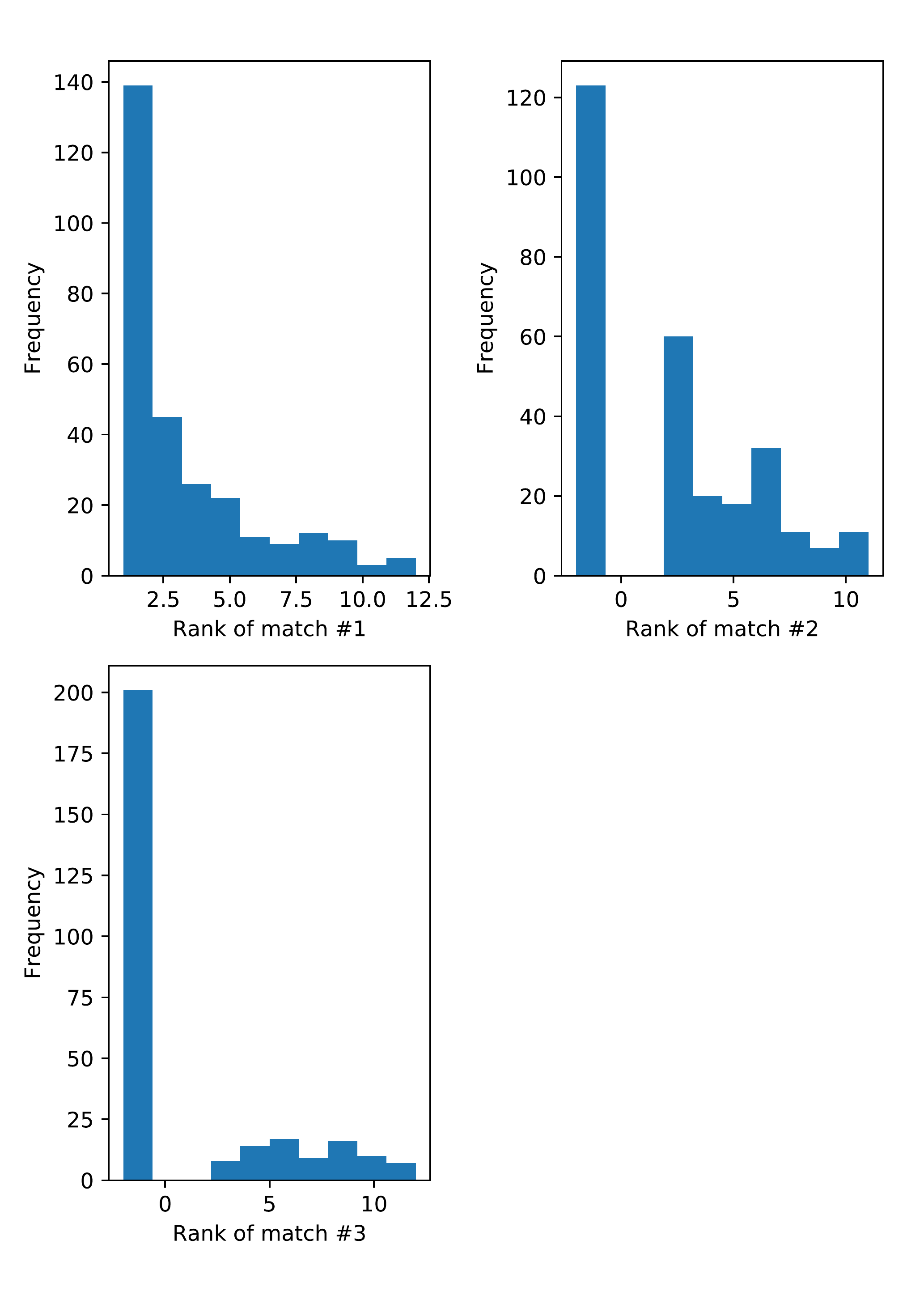}
    \caption{Rank distribution of the students' matches in Phase 2: The top-left shows the number of students whose best match corresponds to their $i$th most preferred person (with $i$ on the $x$-axis).
    The top-right is for the second-best match and the bottom-right is for the third-best match.
    If a person is matched less than three times, the unmatched spots are assigned rank -1 (top-right and bottom-left).
    }
    \label{fig:phase2RankDistribution}
\end{figure}

\subsection{Phase 3 - Advisor Matching}
Between January 23 - February 19 2021, students and professors arrange interviews. 
Professors specify their hiring capacity.
Students and advisors update their preferences by February 19. 
Advisors can be indifferent (e.g. score two students ``1''), but students cannot (and rank up to 10 advisors).
For each advisor, we break ties between equally-scored students based on the total number of scores different from 6 a student got, and randomly break any remaining ties.
These preferences are input unmodified to the matching algorithm (and not modified as before based on the overlap of research fields).
Students have quota 1 and advisors have quota equal to their hiring capacity.
The algorithm is run. Since the number of matches can vary depending on how ties are broken, the algorithm was rerun 10 times and the matching with the maximum number of matches was taken. This resulted in 1 or 2 additional matches.
Around March, advisors send out letters of acceptance to all students they matched with. 
In Figure \ref{fig:phase3RankDistribution}, we plot the distribution of the rank of the $i$th best match for students and advisors respectively, which is the analogous of Figure \ref{fig:phase2RankDistribution}.
All students entering Phase 3 were matched. Most students were matched to their first choice, indicating that people's preferences crystallized out thanks to the previous phases. The matching algorithm was helpful with assigning the remaining persons.
Moreover, we see that all participating advisors were matched at least once. Some advisors were only matched once, either because they had only one spot or the students they were interested in found different matches.

\begin{figure}
    \begin{subfigure}{.5\textwidth}
      \centering
      \includegraphics[height=0.22\textheight]{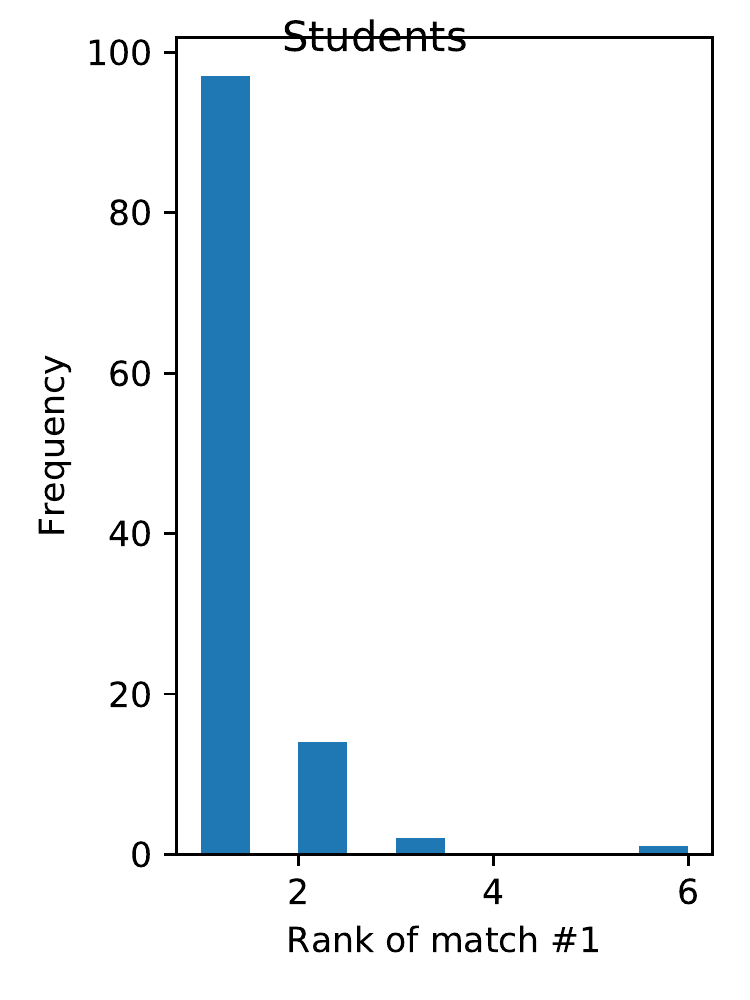} 
    \end{subfigure}%
    \begin{subfigure}{.5\textwidth}
      \centering
      \includegraphics[height=0.4\textheight]{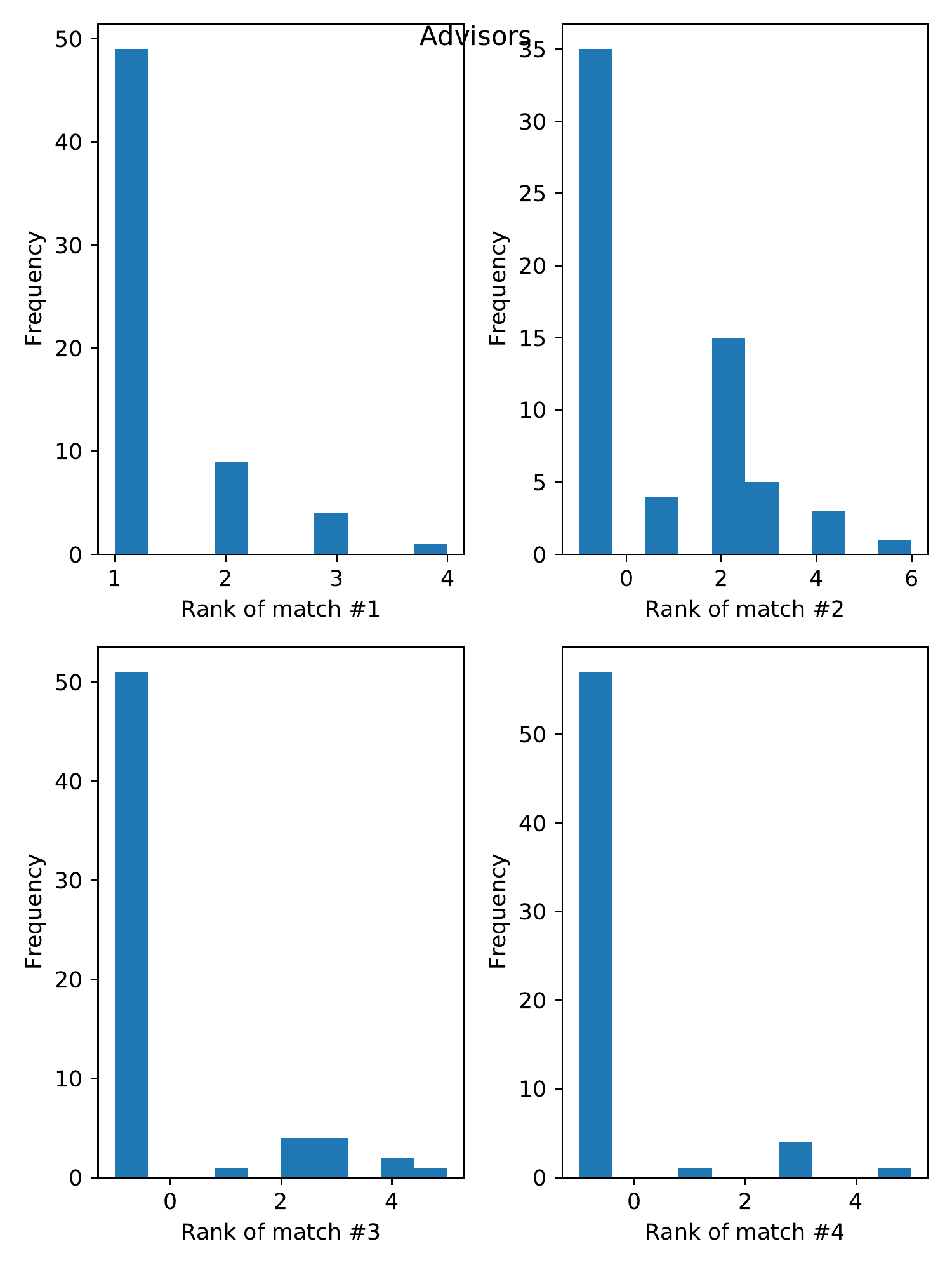}
    \end{subfigure}%
    \caption{Rank distribution of the matches in Phase 3 for students on the left and advisors on the right.
    Rank of match \#i plots a histogram over all persons of the rank each person assigns to their $i$-th best match.
    If an advisor is not matched up to rank \#i, the histogram counts it as rank -1. For example, all advisors were matched at their first spot, whereas 35 advisors were not matched at their second spot (either because they had hiring capacity one or could not find a student).
    }
    \label{fig:phase3RankDistribution}
\end{figure}

\subsection{Phase 4 - Co-Advisor Matching}
This phase is manual.
When a student accepts, the advisor has to find a co-advisor from a different country as required for admission to the ELLIS PhD Program. Students and advisors may already discuss potential co-supervisors during the interview stage.



\section{Acknowledgements}
We want to thank the following persons for the initial idea and help with the practical aspects of this project: Andreas Geiger, Lynn Anthonissen, Leila Masri, and the other members of the ELLIS PhD Committee.

\bibliographystyle{plainnat} 
\bibliography{references}

\appendix

\section{Marriage-market}\label{app:mm}

In the seminal work by \cite{gale1962college}, the authors introduced two types of matching markets, the marriage markets that we are going to recall in this section, and the college admission markets (Section~\ref{app:cap}). These notions are at the basis of the extension in Section~\ref{sec:polygamous}.
In a classical marriage market consisting of a set of men $M$, women $W$ and preferences of each person over the persons of the other side, the goal is to match each man to at most one woman. Each woman can get married to one man at most. In fact, a person may also choose to match with themself rather than match with some unacceptable partner.
We start from some formal definitions and these are illustrated in Example \ref{ex:marriageMarket}.

\begin{definition}[preference lists]
For a market over men $M$ and women $W,$ each man $m$ has preferences $P(m)$ over all persons in $W \cup \{ m \}$ defined by the binary relations $\geq_m, =_m$
(which defines $>_m, <_m, \leq_m$). A man has strict preferences if $=_m$ is equal to $=$, i.e. he is not indifferent between any two people.
Analogously, each woman $w$ has preferences $P(w)$ over all persons in $M \cup \{ w \}$. 
Person $p_1$ is acceptable to $p$ if $p_1 >_p p$.
\end{definition}

These preferences can be represented as ordered lists as shown in Example \ref{ex:marriageMarket}. 

\begin{definition}[marriage market] 
A marriage market over men $M$ and women $W$ is defined by the triple $(M, W, P)$, where $P$ is the set of preference lists, $P=\left\{P\left(m_{1}\right), \dots, P\left(m_{n}\right), P\left(w_{1}\right), \dots, P\left(w_{p}\right)\right\}$.
\end{definition}

A matching $\mu$ is valid if each person is matched to exactly one partner from the opposite sex or themself.

\begin{definition}[valid matching]
A matching on the marriage market $(M, W, P)$ is a correspondence $\mu: M \cup W \rightarrow M \cup W$ such that:
\begin{itemize}
    \item $\mu(m) \in W \cup\{m\},$
    \item $\mu(w) \in M \cup\{w\},$
    \item $\mu(m) = w \iff m = \mu(w).$
\end{itemize}
\end{definition}

While there exist many valid matchings, matchings should not fall apart quickly because people find better partners, thus disregarding the matching. 
This can be ensured if the matching is stable.
A matching is stable if there does not exist a man $m$ and woman $w$ such that $(m, w)$ strictly prefer each other to the partners they are currently matched with. 
In addition, each person prefers to stay single rather than match with some unacceptable partner.
This leads to the following definition.

\begin{definition}[stability]
A matching $\mu$ is unstable if there exists a pair $(m, w) \in (M \times W) \cup \{ (m, m) \mid m \in M \} \cup \{ (w, w) \mid w \in W \}$ such that $w >_m \mu(m)$ and $m >_w \mu(w)$. $(m, w)$ is called a blocking pair of $\mu$\footnote{The case $m=w$ is also known as individual rationality in the literature.}.
\end{definition}

Stability implies that it is enough to list all acceptable partners up to the position of the person themself. A person will always prefer to match to themself rather than match to anyone coming afterwards in their preferences. 

At this point, let us consider an example.

\begin{example}
\label{ex:marriageMarket}
Consider the market with men $M = \{ m_1, m_2 \}$ and women $W = \{ w_1, w_2, w_3 \}$ and preferences $P$ (expressed as a list in the order of decreasing preference):
\begin{align*}
P(m_1) &= \{ w_1, w_2, w_3, m_1 \},\quad
P(m_2) = \{ w_2, w_1, m_2 \},\\
P(w_1) &= \{ m_2, m_1, w_1 \},\quad
P(w_2) = \{ m_1, m_2, w_2 \},\quad
P(w_3) = \{ m_1, w_3 \}.
\end{align*}
We could equivalently write $P(m_2) = \{ w_2, w_1, m_2, w_3 \}$ with woman $w_3$ unacceptable to $m_2$, but this does not affect the set of stable matchings because $m_2$ will always prefer himself to $w_3$.
One can verify that the matching $\mu = \{ (m_1, w_2), (m_2, w_1), (w_3, w_3) \}$ is stable with woman $w_3$ matched to herself.
Another stable matching is $\mu = \{ (m_1, w_1), (m_2, w_2), (w_3, w_3) \}$.
Indifferent preferences are represented by square brackets:
\begin{align*}
P(m_1) = \{ [w_1, w_2], w_3, [w_4, w_5], m_1 \}.
\end{align*}
It means that $w_1 =_m w_2 >_m w_3 >_m w_4 =_m w_5 >_m m_1 >_m \textrm{anyone else}$.
A person can never be indifferent between themself and anyone else, i.e. $[w_5, m_1]$ is not allowed in the preferences of $m_1$.
\end{example}

Though desirable, it is questionable whether a stable matchings can be found in general.
The celebrated Gale-Shapley algorithm (\GSalg) presented in \citep{gale1962college}, shows by construction that a stable matching exists in all marriage markets. 
The pseudocode of the \GSalg{} is provided in Algorithm \ref{alg:galeShapley}. The \GSalg{} is typically exposed by letting men propose at the same time. 
The version presented here makes the generalization in Section~\ref{sec:polygamous} more straightforward.

\begin{algorithm}[ht]
\DontPrintSemicolon
\SetAlgoLined
\KwData{Marriage market $(M, W, P)$}
\KwResult{Matching $\mu_M$}
$\mu \gets \{ (w, w) \mid w \in W\}$ \tcp{match every woman to herself}
\While{there is an unmatched man $m$}{
 $w \gets \ON(P, m)$ \tcp{partner: woman or man himself}
 \uIf{$w = m$}{
   $\mu \gets \mu \cup \{ (m, m) \}$ \tcp{man proposed to himself and accepts}
 }
 \ElseIf{$m >_w \mu(w)$}{
   $\mu \gets \mu \setminus \{ (m', w) \}$ \tcp{\small unmatch $(m', w)$}
   $\mu \gets \mu \cup \{ (m, w) \}$ \tcp{\small match $(m, w)$}
 }
}
\Return{$\mu$}
\caption{Deferred Acceptance Algorithm or Gale-Shapley Algorithm}
\label{alg:galeShapley}
\end{algorithm}

The GS algorithm works by letting men propose to women and women conditionally accept unless they get an offer from a better man later on. It starts with all men unmatched and all women matched to themselves.
As long as a man is unmatched, consider any unmatched man. He proposes to his next most preferred woman he has not proposed to already. The function that does this in Algorithm~\ref{alg:galeShapley} is $\ON$.
In case of indifferent preferences, a man arbitrarily picks any of the equally preferred women.
If a man has proposed to all of his acceptable women, he proposes to himself instead (he accepts and remains single).
If the woman prefers the man to her current partner, she disengages from her old partner (a man or herself), leaving her old partner unmatched again. She engages/matches with this new man.
The algorithm stops once all men are matched (with a woman or themeselves).
The matched men and women are married.
The algorithm is also termed \emph{deferred acceptance algorithm} since a woman confirms her engagement only once the algorithm terminates and may break up for a better man any time before.
The algorithm does not specify the order in which free men are chosen or how a man decides between indifferent women. When preferences are strict, the returned matching is always the same.

\begin{theorem}[Gale and Shapley]
The GS algorithm terminates and returns a valid and stable matching.
\end{theorem}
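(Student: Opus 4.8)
The plan is to establish three properties in sequence: termination, validity, and stability. This mirrors the structure we will use later for the polygamous generalization in Proposition~\ref{prop:optMain}, so I would keep the argument clean enough to reuse.

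First I would prove termination by a monotone-progress argument. The key observation is that in the \textbf{else-if} branch (and implicitly whenever $\ON$ is called) a man proposes to his next most-preferred woman whom he has not yet proposed to. Each man therefore runs through his finite acceptable list at most once, and once exhausted proposes to himself and is removed from the pool of unmatched men. Since there are finitely many men and each makes finitely many proposals, the total number of iterations of the while-loop is bounded by $\sum_{m \in M} (|P(m)|+1)$, so the algorithm halts.

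Next I would verify validity. Here the main point is to track the loop invariant that every woman is matched to exactly one partner (a man or herself) at all times. Initially each $w$ is matched to herself via $\mu \gets \{(w,w) \mid w \in W\}$. When a man $m$ is accepted by $w$ in the else-if branch, we first remove the pair $(m',w)$ — where $m' = \mu(w)$ is her unique current partner — and then add $(m,w)$, so her count stays exactly one. A man, once matched (to a woman or himself), is no longer unmatched and never proposes again, so $\mu(m)$ is a singleton. Hence every person has exactly one partner and $\mu(m)=w \iff m=\mu(w)$, giving a valid matching.

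Finally I would prove stability, which is where the real content lies and which I expect to be the main obstacle. The governing lemma is that each woman's partner only improves over the course of the algorithm: she replaces $m'$ by $m$ only when $m >_w m'$, so her match is monotone non-decreasing in $>_w$. Given this, I argue by contradiction: suppose $(m,w)$ is a blocking pair in the returned $\mu$, so $w >_m \mu(m)$ and $m >_w \mu(w)$. Since $w >_m \mu(m)$ and $m$ ends matched to $\mu(m)$ (or himself) only after exhausting everyone he prefers, $m$ must have proposed to $w$ at some point. At that proposal $w$ either accepted $m$ or already held someone she preferred to $m$; either way, by the monotonicity of her match, her final partner $\mu(w)$ satisfies $\mu(w) \geq_w m$, contradicting $m >_w \mu(w)$. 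The individual-rationality case $m=w$ is immediate: a man only ever accepts himself after running out of acceptable women, and women start matched to themselves, so no one is matched to an unacceptable partner. The delicate step is making the monotonicity lemma precise and confirming that ``$m$ must have proposed to $w$'' holds even under the indifference allowed by $=_m$ — here I would note that $w >_m \mu(m)$ is a strict relation, so $w$ genuinely precedes $\mu(m)$ on $m$'s list and is not skipped.
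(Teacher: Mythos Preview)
Your proposal is correct and follows essentially the same three-step structure as the paper's proof: termination via finiteness of each man's proposal list, validity of the returned matching, and stability by contradiction using the monotonicity of each woman's match over iterations. One small slip worth fixing in the validity paragraph: the claim that a man ``once matched (to a woman or himself) \dots\ never proposes again'' is false, since a man matched to a woman can be displaced in the else-if branch and return to the unmatched pool; the correct reason $\mu(m)$ is a singleton at termination is simply the while-loop exit condition together with the invariant that a man is never in more than one pair of $\mu$ at a time.
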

\begin{proof}
The algorithm terminates because the preference lists of men are finite and a man always accepts himself. 
The returned matching is valid because a man is matched to exactly one partner, himself or a woman.

By contradiction, we prove that the matching is stable.
Assume $(m, w)$ blocks $\mu$. If $m = w$, this means that $m$ is either a man or a woman and matched to an unacceptable partner in $\mu$. This is impossible because only acceptable partners are matched in the algorithm. Otherwise, $m$ is a man and $w$ a woman and $w >_{m} \mu(m)$, $m >_w \mu(w)$. This means that $m$ proposed to $w$ before proposing to $\mu(m)$ and was rejected. Since $w$'s match cannot decrease over iterations, this means that $w$ would have accepted.
\end{proof}

Assuming that an unmatched man can be identified in $O(1)$, the running time is $O(M W)$ since each man proposes to a woman at most once. Whilst it is possible to find an unmatched man in $O(1)$ by storing the free men in a set, uniformly picking a free man is not $O(1)$. One could obtain a random sample from a set in $O(\text{\# free men})$ by passing via a list.



Under strict preferences, it can be shown that the GS matching is optimal in the sense that each man is matched to the best partner he could get among all stable matchings. In addition, the matching is worst-optimal for women, meaning that each woman gets the worst partner among all stable matchings.
\begin{proposition}[Optimality]
Assume strict preferences and let $\mu$ the matching returned by the GS algorithm. Then $\mu(m) \geq_m \mu'(m)$ for each man $m$ and any stable matching $\mu'$. Also, $\mu'(w) \geq_w \mu(w)$ for each woman $w$ and any stable matching $\mu'$.
\end{proposition}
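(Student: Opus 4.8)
The plan is to prove both claims through the standard notion of an \emph{achievable} partner, recognizing that this is essentially the quota-one specialization of Proposition~\ref{prop:optMain}, so the argument mirrors that proof in the simpler marriage setting. Call a woman $w$ \emph{achievable} to a man $m$ if some stable matching $\mu'$ pairs them, i.e.\ $\mu'(m) = w$. Since the \GSalg{} lets each man propose to women in strictly decreasing order of preference and he ends matched to the best woman who never rejected him (or to himself), man-optimality $\mu(m) \geq_m \mu'(m)$ reduces to a single claim: \emph{during the execution of the algorithm no man is ever rejected by a woman achievable to him}. Granting this, any $w'$ with $w' >_m \mu(m)$ must have been proposed to by $m$ and have rejected him, hence is unachievable; so every achievable woman is $\leq_m \mu(m)$, and in particular $\mu'(m) \leq_m \mu(m)$ for every stable $\mu'$. (The case $\mu(m)=m$ is handled the same way: all acceptable women rejected $m$, so no acceptable woman is achievable, forcing $\mu'(m)=m$ by individual rationality.)

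I would establish the claim by a minimal-counterexample argument on the sequence of proposals. Suppose $m$ is the first man ever rejected by an achievable woman $w$, and let $\mu'$ be a stable matching with $\mu'(m) = w$. At the moment of rejection, $w$ is provisionally engaged to some man $m'$ with $m' >_w m$. Because $m'$ had already proposed to $w$ and, by minimality of the counterexample, had not yet been rejected by any woman achievable to him, $w$ is at least as good for $m'$ as his best achievable partner, so $w \geq_{m'} \mu'(m')$; strictness together with $\mu'(m') \neq w$ (as $\mu'(m)=w$) yields $w >_{m'} \mu'(m')$. Combined with $m' >_w m = \mu'(w)$, the pair $(m', w)$ blocks $\mu'$, contradicting its stability. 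This proves the claim and hence man-optimality, using that $\mu$ itself is stable (shown above) so that $\mu(m)$ is indeed achievable.

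For woman-pessimality I would deduce it directly from man-optimality. Fix a stable $\mu'$ and a woman $w$, and set $m = \mu(w)$; if $w$ is self-matched in $\mu$ the inequality is immediate since $w$ only ever accepts acceptable men. Man-optimality gives $w = \mu(m) \geq_m \mu'(m)$, so under strict preferences either $\mu'(m) = w$, whence $\mu'(w) = m = \mu(w)$, or $w >_m \mu'(m)$. In the latter case, were it true that $\mu(w) >_w \mu'(w)$, then $(m, w)$ would block $\mu'$; stability therefore forces $\mu'(w) \geq_w \mu(w)$, as required.

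The main obstacle is the minimal-counterexample step: the delicate point is converting the ``first rejection'' hypothesis into the \emph{strict} inequality $w >_{m'} \mu'(m')$ that is needed to exhibit a genuine blocking pair in $\mu'$, which is exactly where strictness of preferences is essential. Everything else — the decreasing-proposal bookkeeping and the routine treatment of self-matches so that $\mu$ stays well defined — is straightforward once this inequality is in hand.
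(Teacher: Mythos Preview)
Your proof is correct and follows essentially the same route as the paper: the paper defers this proposition to the many-to-many case (Proposition~\ref{prop:optMain}), whose argument is exactly the achievable-partner / first-rejection contradiction you carry out here in the quota-one setting. You additionally supply the short woman-pessimality deduction from man-optimality, which the paper only asserts.
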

The proof is a special case of the proof in Proposition \ref{prop:optMain}.
Since optimality implies uniqueness, the GS algorithm returns a unique matching under strict preferences, independently of the order in which men propose.
By inverting the roles of men and women, an equivalent version of the GS algorithm lets women propose to men and the matching is generally different. Under strict preferences, this matching is woman-optimal and men-worst-optimal.

\section{College Admission Problem}\label{app:cap}
The marriage market can be generalized to the college admission setting. Instead of men and women, the market consists of colleges and students. A student can go to at most one college, but a college can accept more than one student, up to its capacity.
We will state more general definitions respect to the previous section and some of them can also be applied to the extension in Section~\ref{sec:polygamous} when students can have capacities as well (polygamous market).

\begin{definition}[college admission market]
A college admission market over students $S$ and colleges $C$ is defined by the quadruple $(S, C, P, Q)$, where:
\begin{itemize}
    \item $Q = \{ q_c \mid c \in C \} \cup \{ q_s \mid s \in S \}$ are the capacities of colleges and students,
    \item $P$ is the set of preference lists of colleges and students:
    \[P=\left\{P\left(c_{1}\right), \dots, P\left(c_{n}\right), P\left(s_{1}\right), \dots, P\left(s_{p}\right) \right\}.\]
\end{itemize}
We say that the quotas $Q$ satisfy the college admission market assumption if one side of the market has quotas all equal to one. Without loss of generality, we call students the side with quotas all equal to one.
\end{definition}
The college admission market is the market where only one side has quotas greater than one, i.e. $q_s = 1 \; \forall s \in S$.

\begin{definition}[valid matching]
A matching on the college admission market $(S, C, P, Q)$ is a set $\mu \subset S \times C$ such that:
\begin{itemize}
    \item $|\mu(c)| \leq q_c \; \forall c \in C$, where $\mu(c) = \{s \mid (s, c) \in \mu \}$,
    \item $|\mu(s)| \leq q_s (=1) \; \forall s \in S$, where $\mu(s) = \{c \mid (s, c) \in \mu \}$.
\end{itemize}
$\mu(s)$ and $\mu(c)$ equivalently characterize the matching with the property: $s \in \mu(c) \iff c \in \mu(s)$. $|\mu(c)| \leq q_c$ means that $c$ is matched $q_c - |\mu(c)|$ times to itself and we fill $\mu(c)$ with $c$ up to size $q_c$, same for students.
\label{def:validMatchingQuotas}
\end{definition}

We give an example of a college admission market.
\begin{example}
Consider the college admission market with students $S = \{ s_1, s_2, s_3 \}$ and colleges $C = \{ c_1, c_2 \}$ with capacities $2$ and $3$. Preferences for students over colleges are expressed as before. For colleges, they express their preferences over groups of students of size less or equal to their capacity.
For some preferences, a valid matching could be $\mu = \{ (c_1, \{ s_1, s_2 \}), (c_2, \{ \}), (s_3, s_3) \}$. This can be equivalently written by filling spots to capacity, $\mu = \{ (c_1, \{ s_1, s_2 \}), (c_2, \{ c_2, c_2, c_2 \}), (s_3, s_3) \}$, or by listing the set $\mu = \{ (c_1, s_1), (c_1, s_2) \}$.
\end{example}

We define preferences between a single person and a set of persons as follows.
\begin{definition}
Given a set of persons $K$, we say that $p_1 >_p K$ if there exists $p_2 \in K$ such that $p_1 >_p p_2$.
\end{definition}

The stability definition carries over and we restate it here for clarity. It assumes that $\mu(s)$ and $\mu(c)$ are filled up to their capacity, as described in Definition \ref{def:validMatchingQuotas}.

\begin{definition}[stability]
A matching $\mu$ is unstable if there exists a pair $(s, c) \in (S \times C) \cup \{ (s, s) \mid s \in S \} \cup \{ (c, c) \mid c \in C \}$ such that $s >_c \mu(c)$ and $c >_s \mu(s)$, i.e. there exist $\tau_c \in \mu(c), \tau_s \in \mu(s)$ such that $s >_c \tau_c$ and $c >_s \tau_s$.
In this case, $(s, c)$ is called a blocking pair of $\mu$.
\label{def:stabilityCollege}
\end{definition}

\begin{theorem}
The college admission market admits a valid and stable matching.
\end{theorem}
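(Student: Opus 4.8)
The plan is to reduce the college admission market to an ordinary marriage market and then invoke the Gale--Shapley theorem already proved in Appendix~\ref{app:mm}. Since students carry quota one, only the college side needs splitting. Using the map I will call \TO, from $(S, C, P, Q)$ I would build an extended marriage market in which each college $c$ is replaced by $q_c$ single-capacity copies $c^{(1)}, \dots, c^{(q_c)}$, each copy $c^{(j)}$ inheriting $c$'s preference list over students verbatim (so in particular $s >_{c^{(j)}} c^{(j)} \iff s >_c c$). On the student side I would replace each occurrence of a college $c$ in $P(s)$ by the consecutive block $c^{(1)} >_s \cdots >_s c^{(q_c)}$, placed exactly where $c$ sat, so that the copy-blocks preserve the original college order: every copy of $c$ beats every copy of any $c'$ with $c >_s c'$. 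This is a genuine marriage market (all capacities one), so \GSalg{} terminates and returns a valid, stable matching $\tilde{\mu}$ by the theorem of Appendix~\ref{app:mm}.

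Next I would map $\tilde{\mu}$ back via \From, identifying all copies of $c$ with $c$: set $s \in \mu(c)$ iff $s$ is matched in $\tilde{\mu}$ to some copy of $c$. Validity in the sense of Definition~\ref{def:validMatchingQuotas} is immediate, since each student occupies at most one copy and hence at most one college, giving $|\mu(s)| \le 1$, while $c$ receives at most one student per copy, giving $|\mu(c)| \le q_c$.

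The crux is transferring stability, and this is the step I expect to be the main obstacle, because it requires selecting the right copy and treating empty slots (self-matches) cleanly. Suppose, for contradiction, that $(s, c)$ blocked $\mu$ in the sense of Definition~\ref{def:stabilityCollege}, so $c >_s \mu(s)$ and $s >_c \mu(c)$. The latter yields some $\tau \in \mu(c)$ with $s >_c \tau$; let $c^{(j)}$ be the copy whose $\tilde{\mu}$-partner is $\tau$, with the convention that $\tau = c^{(j)}$ itself when that slot is empty. As $c^{(j)}$ shares $c$'s preferences, $s >_{c^{(j)}} \tilde{\mu}(c^{(j)})$. For the student side, $c >_s \mu(s)$ says $s$ strictly prefers $c$ to its college assignment, or to itself when unmatched; since $\tilde{\mu}(s)$ is the copy of $\mu(s)$ (or $s$ itself) and the copy-blocks preserve the college order, every copy of $c$ beats $\tilde{\mu}(s)$, so in particular $c^{(j)} >_s \tilde{\mu}(s)$. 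Hence $(s, c^{(j)})$ blocks $\tilde{\mu}$, contradicting its stability; therefore $\mu$ is stable.

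I would close with a remark that, because the college admission market is precisely the polygamous market with $q_s = 1$ for all $s$, existence also follows at once from the guarantees already established for \Palg{} in Section~\ref{sec:polygamous}; the extended-market argument above has the advantage of deriving the result directly from the marriage-market theorem recalled in this appendix, keeping the two constructions (\TO and \From) explicit.
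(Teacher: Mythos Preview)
Your proposal is correct and follows essentially the same extended-market reduction as the paper's proof sketch, with the cosmetic difference that you order the copies $c^{(1)} >_s \cdots >_s c^{(q_c)}$ strictly whereas the paper makes students indifferent among them via $[c_1,\dots,c_{q_c}]$ (either choice works, and yours sidesteps tie-breaking). You also spell out the stability-transfer argument that the paper leaves as ``easy to show,'' handling empty slots and unmatched students carefully; the only thing you leave implicit are the individual-rationality cases $(s,s)$ and $(c,c)$ from Definition~\ref{def:stabilityCollege}, but those follow at once since acceptability is preserved copy-by-copy.
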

\begin{proof}[Proof (sketch)]
It is possible to find a stable matching by relying on the GS algorithm. This is illustrated in Algorithm \ref{alg:collegeAlgorithm}.
It works by mapping the college admission market to an extended marriage market, where college $c$ is replicated according to its capacity to $c_1, \dots, c_{q_c}$, each with the same preferences over students as in the original market (with $c$ replaced by $c_i$). Students equally prefer any of the replicated colleges, i.e. any occurrence of $c$ is replaced by the indifferent $[c_1, \dots, c_{q_c}]$, see Example \ref{ex:extendedMarket}.
It is easy to show that a matching is stable on the college admission market if and only if it is stable on the extended marriage market. 
Therefore, the GS algorithm can be used to find a stable matching on the extended market and then map it back.
\end{proof}

\begin{algorithm}[ht]
\DontPrintSemicolon
\SetAlgoLined
\KwData{Market $(S, C, P, Q)$, quotas $Q$ for one side}
\KwResult{Matching $\mu$}
$\tilde{S}, \tilde{C}, \tilde{P}, \textrm{mapping} \gets \TO(S, C, P, Q)$ \;
$\tilde{\mu} \gets \textsc{Gale-Shapley}(\tilde{S}, \tilde{C}, \tilde{P})$ \;
$\mu \gets \From(\tilde{\mu}, \textrm{mapping})$ \;
\Return{$\mu$}
\caption{College Admission Algorithm}
\label{alg:collegeAlgorithm}
\end{algorithm}
In the GS algorithm, either $\tilde{S}$ can propose to $\tilde{C}$ or vice versa.
Traditionally, quotas are such that only one side has quotas greater than one, but the algorithm continues to work for quotas on both sides if we define $\mu$ to be a multi-set so that the same pair can match more than once.
In Section~\ref{sec:polygamous}, we consider the case when both sides have quotas, but the same pair can match at most once.

\begin{example}[extended market construction]
\label{ex:extendedMarket}
Consider students $S = \{ s, \tilde{s} \}$ with capacities 2, 1 and colleges $C = \{ c, \tilde{c} \}$ with capacities 1, 2 and preferences (omitting the person themself in the preferences):
\begin{align*}
P(s) &= \{ c, \tilde{c} \},\quad
P(\tilde{s}) = \{ \tilde{c} \},\\
P(c) &= \{ [\tilde{s}, s] \},\quad
P(\tilde{c}) = \{ s \},\quad
\end{align*}
The extended market maps $s \mapsto [s_1, s_2], \tilde{s} \mapsto [\tilde{s}_1], c \mapsto [c_1], \tilde{c} \mapsto [\tilde{c}_1, \tilde{c}_2]$. It is defined over students $S^{ext} = \{ s_1, s_2, \tilde{s}_1 \}$ and colleges $C^{ext} = \{ c_1, \tilde{c}_1, \tilde{c}_2 \}$ with preferences:
\begin{align*}
P(s_1) &= \{ [c_1], [\tilde{c}_1, \tilde{c}_2] \},\quad
P(s_2) = P(s_1),\quad
P(\tilde{s}_1) = \{ [\tilde{c}_1, \tilde{c}_2] \},\\
P(c_1) &= \{ [\tilde{s}_1, s_1, s_2] \},\quad
P(\tilde{c}_1) = \{ [s_1, s_2] \},\quad
P(\tilde{c}_2) = P(\tilde{c}_1),
\end{align*}
More precisely, $P(s_2) = P(s_1)$ means $P(s_1) = \{ [c_1], [\tilde{c}_1, \tilde{c}_2], s_1 \}$ and $P(s_2) = \{ [c_1], [\tilde{c}_1, \tilde{c}_2], s_2 \}$. Observe what happens to the indifferent preferences of $c$.
The GS algorithm runs on this extended market and the obtained matching is transformed back to give a stable matching on the original market.
\end{example}

Optimality carries over to this setting. In particular, the college-optimal matching is student-worst-optimal and vice versa. This proof can be extended easily to many-to-many markets. It can also be proven by passing via the extended market.
When the same pair can match at most once, this continues to be true because the set of stable matchings that match the same pair at most once is a subset of the set of stable matchings that may match the same pair more than once. The former inherits the ordering of the latter.

\end{document}